\documentclass[letterpaper, 10 pt, conference]{ieeeconf}
\IEEEoverridecommandlockouts 
\usepackage{graphicx} 
\usepackage{cite}
\usepackage{caption}
\usepackage{amsmath}
\usepackage{amsfonts}
\usepackage{mathtools} 
\usepackage{dsfont}
\usepackage{color}
\usepackage{float}
\let\labelindent\relax
\usepackage{enumitem,kantlipsum}
\usepackage{subcaption}
\usepackage{bm}
\usepackage{amsmath,amssymb}

\newtheorem{proposition}{Proposition}

\newtheorem{remark}{Remark}

\date{} 

\title{\LARGE \bf Robust Dynamic Pricing and Admission Control \\
with Fairness Guarantees}

\author{Yingqing Chen, Anni Li, Christos G. Cassandras, Homayoun Hamedmoghadam, Fabian Wirth, and Robert Shorten
\thanks{Y. Chen and C. G. Cassandras are
with the Division of Systems Engineering and Center for Information and
Systems Engineering, Boston University, Brookline, MA 02446
\tt\small\{yqchenn;cgc\}@bu.edu.}
\thanks{Anni Li is with the Department of ECE, UNC-Charlotte, NC \texttt{{\small ali20@charlotte.edu}} }
\thanks{R. Shorten is with the Dyson School of Design Engineering, Imperial College London, London, UK.  \texttt{{\small r.shorten@imperial.ac.uk}} }
\thanks{H. Hamedmoghadam is with the Department of Electrical \& Electronic Engineering, Imperial College London, London, UK.  \texttt{{\small h.hamed@imperial.ac.uk}} }
\thanks{F. Wirth is with the Department of Mathematics, University of Passau, Passau, Germany. \texttt{{\small fabian.wirth@uni-passau.de}}}
}

\begin{document}
\maketitle
\thispagestyle{empty}
\pagestyle{empty}

\begin{abstract} 
 Dynamic pricing is commonly used to regulate congestion in shared service systems. This paper is motivated by the fact that in the presence of users with varying price sensitivity (responsiveness), conventional monotonic pricing can lead to unfair outcomes by disproportionately excluding price-elastic users, particularly under high or uncertain demand. We therefore develop a fairness-oriented mechanism under demand uncertainty.
The paper's contributions are twofold. First, we show that when fairness is imposed as a hard state constraint, the optimal (revenue maximizing) pricing policy is generally non-monotonic in demand. This structural result departs fundamentally from standard surge pricing rules and reveals that price reduction under heavy load may be necessary to maintain equitable access.
Second, we address the problem that price elasticity among heterogeneous users is unobservable. To solve it, we develop a robust dynamic pricing and admission control framework that enforces capacity and fairness constraints for all user type distributions consistent with aggregate measurements. 
By integrating integral High Order Control Barrier Functions (iHOCBFs) with a robust optimization framework under uncertain user-type distribution, we obtain a controller that guarantees forward invariance of safety and fairness constraints while optimizing revenue.
Numerical experiments demonstrate improved fairness and revenue performance relative to monotonic surge pricing policies.

\end{abstract}




\section{Introduction}
Dynamic pricing, commonly implemented in practice through surge pricing, is used to manage access to congestible service systems such as ride-hailing platforms, shared mobility services, healthcare scheduling, and online marketplaces \cite{Onita2023SurgePricing}. By adjusting prices in response to demand fluctuations, service providers can influence user participation, regulate congestion, maximize revenue, and improve overall system efficiency \cite{bimpikis2019spatial,besbes2021surgepricing,hu2022surge}. A substantial literature has shown that appropriate pricing policies can stabilize queueing systems and improve performance, including static pricing rules with robustness properties \cite{bergquist2025static} and dynamic pricing and matching for two-sided or shared-resource systems \cite{varma2023dynamic,banerjee2022pricing}. Consistent with these benefits, evidence from airline revenue management suggests that dynamic pricing can increase revenue by 1 to 4 percent relative to traditional pricing \cite{wittman2018dynamic}.

Despite these advantages, dynamic pricing has raised significant practical concerns: aggressive price discrimination may intensify competition and reduce revenue, fares may spike during emergencies, and regulators have warned that algorithmic pricing, while potentially improving efficiency, can also amplify competitive concerns \cite{wittman2018dynamic,Onita2023SurgePricing,CMA2021Algorithms}.
It also raises important fairness concerns: congestion-based price increases may disproportionately disadvantage price-sensitive users, especially during high demand periods. In socially essential services such as transportation and healthcare, such exclusion directly undermines equitable access and social welfare \cite{seele2021ethicality}.
Even when uniform pricing is imposed across protected groups, fairness is not guaranteed: heterogeneous price sensitivities, congestion tolerances, and dropout behavior may still generate systematic disparities in realized service access\cite{cohen2022price,king2025dynamic}.

These concerns have motivated growing interest in fairness-aware dynamic pricing. Formal notions of fairness, including group and individual fairness, have been adapted from algorithmic decision-making settings \cite{dwork2012fairness} and incorporated into pricing models through constraints on price dispersion or access rates \cite{cohen2022price,xu2023doubly,cohen2025dynamic}. 
Other formulations impose fairness constraints within online learning or regret minimization frameworks \cite{xu2023doubly,cohen2025dynamic}, while related queueing-based pricing models study fair and efficient resource sharing under strategic behavior\cite{nakamura2025fair}.

However, two fundamental limitations remain.
First, fairness is typically enforced as an objective or in expectation, rather than as a real-time state constraint \cite{cohen2022price,cohen2025dynamic,xu2023doubly}. As a result, transient fairness violations can occur during demand surges. 
Second, most existing models assume that user class attributes are observable and can be directly incorporated into pricing decisions \cite{xu2023doubly,cohen2022price,cohen2025dynamic,nakamura2025fair}.
In practice, class-level states may only be indirectly inferred from aggregate system information, may be estimated with substantial noise, or may be unavailable because monitoring protected attributes, such as race, gender, or disability status, is legally or ethically constrained \cite{mozannar2020fair}. 
These gaps motivated our prior work \cite{king2025dynamic}, which studied dynamic pricing under unobservable demand composition and characterized fairness conditions in heterogeneous queueing systems, building on access-control models from communication networks \cite{budzisz2011fair}. In that setting, elastic and inelastic traffic compete for a congestible shared service under a common price signal. The analysis shows that monotonic surge pricing can drive out elastic users without alleviating congestion, whereas a non-monotonic pricing policy can sustain a desirable equilibrium with increased fairness. More broadly, the prior work provides a mathematical equilibrium analysis and further characterizes the domain of attraction of the desired equilibrium.

This paper builds on that foundation by moving from qualitative equilibrium analysis to online controller synthesis. Specifically, we jointly control the (a) price, which affects users’ decisions to enter, and (b) the admission rate, which regulates how much waiting demand is allowed into the service system. Both controls are computed in real time under explicit resource-capacity and fairness constraints while accounting for unobservable demand composition.

Our contributions are twofold. First, we identify a fundamental price-demand relationship in fairness-constrained pricing systems: when fairness is enforced as a hard state constraint, the revenue-maximizing price is generally non-monotonic in demand. 
Second, we develop a robust dynamic pricing and admission control framework for heterogeneous demand with unobservable composition. The resulting controller provides real-time fairness and safety guarantees while optimizing revenue. In this sense, it offers an effective alternative to conventional surge pricing, with simulations showing improved fairness and revenue under stochastic and time-varying demand.

\section{Problem Formulation}
As shown in Fig. \ref{fig:demo}, we consider a queueing system with the service demand comprised of $N$ classes of users sharing similar behavioral characteristics, such as price sensitivity or demand level sensitivity. 
Users awaiting admission are modeled by the $N$ \emph{demand queues} corresponding to the sensitivity levels.
The waiting users may exit the system before admission due to their sensitivity to price or to current demand queue length. 
The remaining users are gradually admitted into the service queue at an admission rate $\alpha(t)$. 

Once admitted to the \emph{service queue}, users are charged a one-time service price $p(t)$, which is uniform across all users. Users in the service queue are then served at a service rate $\mu(t)$, which is characteristic of the system and can be time-varying.

Let \(x_i(t)\), \(i = 1,\ldots,N\), denote the queue length of the \(i\)-th demand queue. Note, however, that while these demand queues conceptually exist, their individual lengths are not directly observable. Instead, only the aggregate demand queue length
\begin{equation} \label{eqn: z}
    z(t) = \sum_{i=1}^N x_i(t)
\end{equation}
is assumed to be observable, as it can be readily measured or estimated in practice. 

\begin{figure}
    \centering
    \includegraphics[width=0.99\columnwidth]{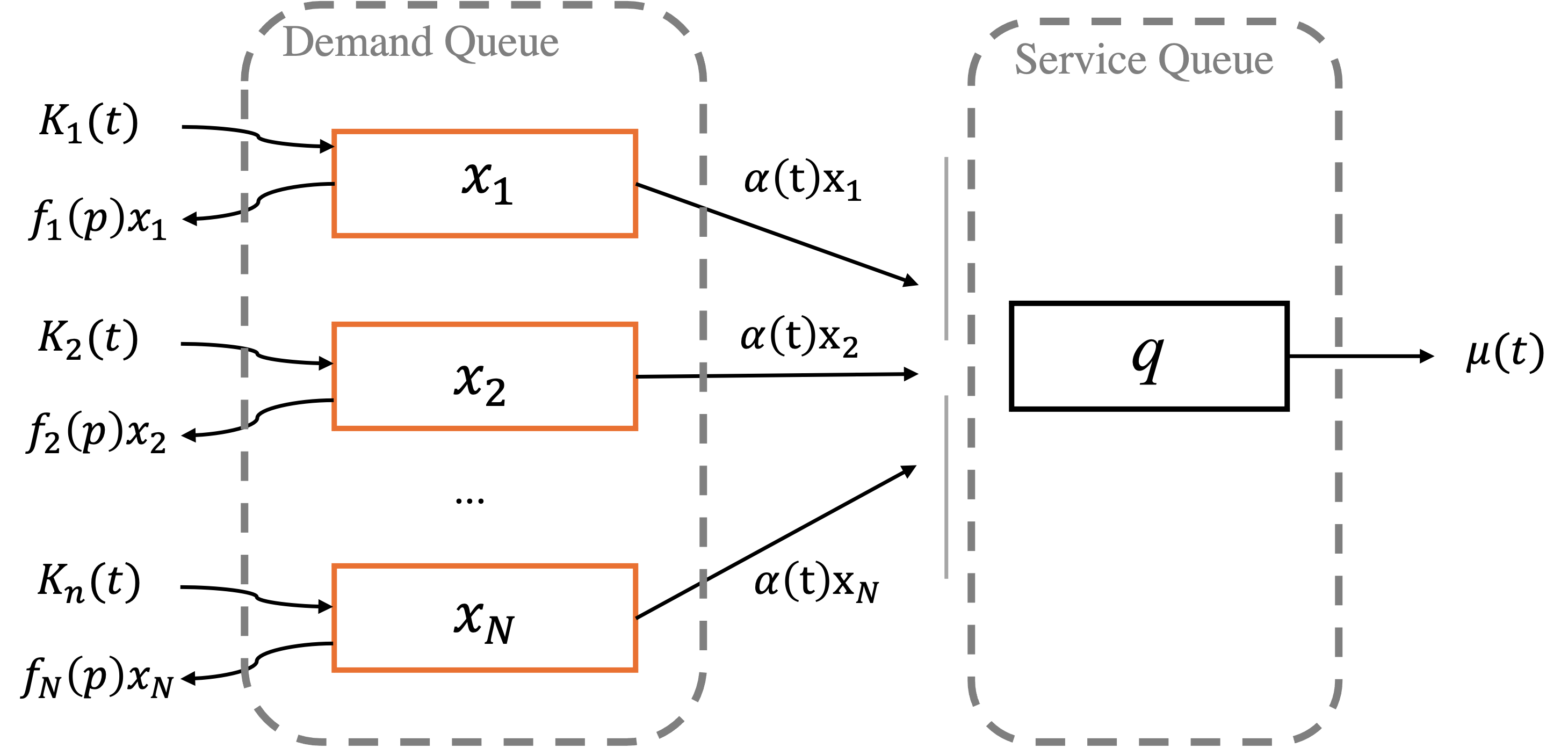}
    \caption{\small\textbf{Two-stage queueing model with heterogeneous demand sources.} 
   Each class $i$ ($i = 1, \ldots, N$) maintains a demand buffer $x_i$, 
    driven by exogenous arrivals $K_i(t)$ and price-dependent dropout $f_i(p)x_i$.
    A fraction $\alpha(t)$ of demand is admitted to the shared service queue $q$,
    which serves at rate $\mu(t)$.
     }

    \label{fig:demo}
\end{figure}

As shown in Fig.~\ref{fig:demo}, the dynamics of each demand queue \(x_i(t)\) are governed by three factors:
\begin{enumerate}
    \item \emph{Exogenous demand}: Each source generates a time-varying arrival process \(K_i(t)\), which is independent of the queue state and control actions.
    \item \emph{Dropout}: A fraction of users may abandon the demand queue due to instantaneous pricing or demand level.
    The dropout rate depends on the service price \(p(t)\) and is heterogeneous across demand sources. We model this effect using a source-specific dropout function \(f_i(p(t))\).
    \item \emph{Admission}: Users are admitted from the demand queues to the service queue at a rate \(\alpha(t)\), which may depend on the current service queue length \(q(t)\).

\end{enumerate}

The evolution of the service queue length \(q(t)\) is determined by:
\begin{enumerate}
    \item \emph{Admission}: The aggregate inflow from all demand queues.
    \item \emph{State-dependent service}: Users depart the service queue at a service rate \(\mu(t)\), which may depend on the current queue state \(q(t)\).
\end{enumerate}

The system controller
seeks to jointly determine the pricing policy \(p(t)\) and admission control \(\alpha(t)\) 
over time so as to:
$(a)$ Regulate congestion by ensuring that \(q(t)\) remains below a prescribed maximum level \(q^{\max}\),
$(b)$ Maximize the total revenue generated by the system, and 
$(c)$ Maintain fairness across heterogeneous demand classes.


Note that although incoming demand is conceptually divided into heterogeneous classes, the controller cannot differentiate among them when implementing control actions. Consequently, both the service price \(p(t)\) and the admission rate \(\alpha(t)\) are uniform across all demand sources.

\begin{remark}
    Queueing systems with a preliminary demand buffer and endogenous dropout are common in real-world applications. A representative example is a ride-hailing platform, where users wait in a virtual buffer before being matched with a driver and may abandon the system due to high prices or unfavorable demand queue conditions.
\end{remark}

\begin{remark}
    The importance of fairness depends on both the nature of the service system and its operating regime. In essential service, 
    such as healthcare programs or public transportation networks, equitable access is often a primary requirement and should therefore be enforced explicitly and in real time. In private platforms, such as ride-hailing markets, revenue maximization may be the dominant objective under normal conditions. Still, the revenue-fairness trade-off can change substantially across different demand and congestion regimes: under some conditions the two objectives may be closely aligned, while under others they may conflict more sharply. Our framework is designed to accommodate this variation by treating fairness as a tunable operational requirement rather than assuming the same priority structure across all systems and states.
    As shown later in the simulation study, the unfairness threshold provides a direct mechanism for tuning this trade-off between fairness and revenue.
\end{remark}


\textbf{System Dynamics.}
In order to formulate an optimization problem, we model the dynamics with generality while maintaining analytically tractable. Each of the $N$ user classes is characterized by a distinct exogenous demand process. The control inputs $p(t)$ and $\alpha(t)$ represent pricing and admission mechanisms applied uniformly across all user classes.

User heterogeneity is captured through class-dependent dropout behavior. 
In practice, users may differ in their ability to comply with higher prices, so the same offered price can push some classes of users out of the system more quickly than others. 
Therefore, differences in dropout become an interpretable marker of the inequities that can arise even when one price is applied to everyone.
For analytical simplicity, the dropout rate of each class is modeled as a linear function of the price, though the framework readily extends to more general non-decreasing functions. For each class \(i = 1,2\ldots N\), the dropout rate is given by
\begin{equation}\label{eqn:dropout_rate}
    f_i(p(t)) = r_{i,1} p(t) + r_{i,2},
\end{equation}
where \(r_{i,1}\) characterizes the price sensitivity of class \(i\), and \(r_{i,2}\) captures its baseline dropout tendency independent of price. The parameters are normalized to ensure \(0 \leq f_i(p(t)) \leq 1\). 

The dynamics of the demand queue associated with each class \(i\) are described by
\begin{equation}\label{eqn:demand_queue_dynamics}
    \dot{x}_i(t) = K_i(t) - f_i(p(t)) x_i(t) - \alpha(t) x_i(t),
\end{equation}
where \(K_i(t)\) denotes the exogenous demand arrival rate of class \(i\), and \(\alpha(t)\) is the admission rate to the service queue.

Due to partial observability, we use total demand arrival rate $K(t) = \sum_i K_i(t)$,  and time-varying proportion variables $w_i(t)=z(t)/x_i(t)$. 
The aggregate demand queue dynamics can be written as
\begin{equation}\label{eqn: dynamics_total_demand_queue}
    \dot{z}(t) = K(t)
    - \sum_{i=1}^N f_i(p(t)) w_i(t) z(t)
    - \alpha(t) z(t).
\end{equation}

The dynamics of the service queue are then given by
\begin{equation}\label{eqn: dynamics_service_queue}
    \dot{q}(t) = \alpha(t) z(t) - \mu(t),
\end{equation}

We use the same service rate function as in \cite{king2025dynamic}, assuming that the service rate increases linearly with the queue length when the system is lightly loaded and saturates at a maximum capacity. This choice mimics the nature of many real-world systems with bounded service capacity.
Specifically, given a constant maximum service-rate \(\mu^* > 0\) and critical queue-length \(q_c > 0\) intrinsic to a system, the service rate can be defined as
\begin{equation}\label{eqn:service_rate}
\mu(t) =
\begin{cases}
\dfrac{\mu^*}{q_c} \, q(t), & 0 \leq q(t) \leq q_c, \\[6pt]
\mu^*, & q(t) \geq q_c.
\end{cases}
\end{equation}

\subsection{Fairness Level}
One of the primary control objectives is to maintain fairness across heterogeneous demand groups. 
To quantify fairness, we introduce an \emph{unfairness index}:
\begin{equation}\label{eqn: Unfairness Index}
I(t) \;=\; \frac{1}{T_I} \int_{t-T_I}^{t}
\frac{\sum_{i} f_i\!\left(p(\tau)\right)\, w_i(\tau)\, z(\tau)}
{K(\tau)} \, d\tau .
\end{equation}

The index \(I(t)\) is the time-averaged ratio of the total dropout volume (which corresponds to users whose requests are not fulfilled) and the total observed demand over the window $T_I$. A larger value of \(I(t)\) indicates a higher degree of unfairness in service allocation, while a smaller value reflects a more balanced and equitable system performance.
The window length \(T_I\) serves as a tunable parameter that determines the temporal sensitivity of the fairness metric. A larger \(T_I\) smooths short-term fluctuations and emphasizes long-term fairness, thereby allowing transient unfairness, while a smaller \(T_I\) enforces stricter regulation by penalizing even short-term deviations from fairness.

\begin{remark}
    The fairness metric in \eqref{eqn: Unfairness Index} is not intended to be the only possible choice. Rather, it serves as a practical representative that allows us to develop a control framework with explicit fairness guarantees. The same framework can accommodate alternative or enriched definitions of fairness, including metrics that account for additional service-quality factors such as waiting time prior to dropout.
\end{remark}

\subsection{Optimal Control Problem}
Our objective is to determine the optimal price and admission rate that \emph{jointly} maximize the total revenue, subject to two constraints:  
(a) the service queue length must remain below its capacity ($q(t) \leq q_{\max}$); and  
(b) the unfairness level must be maintained within an acceptable predefined range.

Although a dynamic adjustment of the price and admission rate is desirable, excessively frequent policy updates undermine users’ trust in the pricing policy. To address this issue, we introduce a decision time window of length \(T_d\). At each decision point, an optimal control policy is computed based on the most recent system observations and is held constant over the subsequent time window \([t, t+T_d]\). 
Within each decision window, we maximize the total revenue generated over that interval. This objective reflects the natural operational goal of the service provider, while service capacity and fairness are enforced separately as hard constraints. 
The parameter \(T_d\) determines the frequency of policy updates and reflects a trade-off between \emph{responsiveness} and \emph{stability}. This receding-horizon implementation preserves the adaptivity of the dynamic pricing mechanism while mitigating instability induced by rapid policy fluctuations.


\textbf{Problem 1: Revenue Maximization with Capacity and Fairness Constraints.}
At each decision point, determine the pricing and admission policy over the next decision window $[t, t+T_d]$ so as to maximize the total revenue, subject to service capacity and fairness requirements.

\emph{Objective:}  
Maximize the total revenue accumulated over the upcoming decision window. The instantaneous revenue is given by the product of the price $p(t)$, admission rate $\alpha(t)$, and total observed demand $\zeta(t)$. The resulting objective function is
\begin{equation}\label{eqn: obj}
    \max \; J(t) = \int_{t}^{t+T_d} p(t)\, \alpha(t)\, z(\tau)\, d\tau . 
\end{equation}


\emph{Constraint 1 (Service Queue Capacity):}  
The service queue length $q(t)$ must not exceed its allowable capacity $q_{\max}$.
This requirement is expressed as
\begin{equation}\label{eqn: const1_service queue capacity}
    b(q(t)) = q_{\max} - q(t) \geq 0 .
\end{equation}

\emph{Constraint 2 (Fairness):}  
To prevent excessive unfairness among users, the unfairness index $I$ is required to remain below a prescribed threshold $\theta_d$ at all times: 
\begin{equation}\label{eqn: const2_fairness_constraint}
    I(t) \leq \theta_d ,
\end{equation}
where $\theta_d$ denotes the maximum allowable level of unfairness.

This completes the formulation of the nominal optimal control problem. In the next section, we introduce reformulations to address state-constraint enforcement and partial observability issues.

\section{Robust Dynamic Pricing Control with iHOCBF}
Problem~1 cannot be solved directly for two reasons. First, the service queue capacity constraint cannot be enforced explicitly through the control inputs, as the queue length evolves according to system dynamics rather than being directly actuated.
In other words, 
(\ref{eqn: const1_service queue capacity}) constrains the system state, but does not provide information about the control that can enforce the coinstraint.
Second, the aggregate demand queue dynamics in \eqref{eqn: dynamics_total_demand_queue} depend on the proportion variables \(w_i(t)\), which are not directly observable. This partial observability induces parametric uncertainty in the system dynamics and prevents exact state feedback implementation. 
In this section, we address both challenges and develop a tractable solution approach
by enforcing queue capacity via High Order Control Barrier Functions (HOCBFs) and by introducing a robust control framework that accommodates uncertainty in the demand composition.

\subsection{CBF-Based Enforcement of Service Queue Capacity Constraints}

Among the two constraints in Problem~1, \emph{Constraint~2} depends explicitly on the control inputs and can therefore be imposed directly. In contrast, \emph{Constraint~1} defines a safe region in the state space, so it must be enforced indirectly by determining a control that properly steers the system dynamics. To do this, we transform the state constraint into a control-dependent safety condition using Control Barrier Functions (CBFs) \cite{xiao2023safe}.

Briefly, for a control-affine system \(\dot{x}=f(x)+g(x)u\), let \(b(x)\geq 0\) characterize the safe set. A CBF replaces the original state constraint with an inequality involving the control input \(u\). If this inequality is satisfied at all times and the initial state lies in the safe set, then the safe set remains \emph{forward invariant}.
Note that this is a \emph{sufficient} condition, hence this inequality possibly introduces some conservativeness.
A key concept is the \emph{relative degree} of \(b(x(t))\) with respect to the control input, defined as the number of times \(b(x(t))\) must be differentiated along the system dynamics before \(u\) appears explicitly \cite{xiao2023safe}. When the relative degree is one, a standard CBF is sufficient. When it exceeds one, however, a high-order CBF (HOCBF) is required \cite{xiao2021high}. Specifically, suppose \(b(x(t))\) has relative degree \(r\), where \(r\) is the smallest integer such that \(L_gL_f^{r-1}b(x)\neq 0\). Define
\begin{equation}
\begin{aligned}
\psi_0(x)&=b(x), \\
\psi_i(x)&=\dot{\psi}_{i-1}(x)+\alpha_i(\psi_{i-1}(x)), \quad i=1,\dots,r-1,
\end{aligned}
\end{equation}
where each \(\alpha_i\) is a class-\(\mathcal{K}\) function, and \(L_f(\cdot)\) and \(L_g(\cdot)\) denote Lie derivatives along \(f\) and \(g\), respectively. The corresponding HOCBF constraint is \cite{xiao2023safe}
\begin{equation}\label{eqn: general_hocbf_constraint}
L_f^r b(x)+L_gL_f^{r-1}b(x)\,u+O\bigl(b(x)\bigr)+\alpha_r\!\left(\psi_{r-1}(x)\right)\geq 0,
\end{equation}
where \(O(b(x))\) collects the lower-order Lie derivative terms generated when differentiating \(\psi_{r-1}\). This formulation converts the original safety requirement into an affine inequality in the control input while preserving forward invariance, and it provides a sufficient condition for guaranteeing satisfaction of the original safety constraint.

In our model, \eqref{eqn: dynamics_total_demand_queue} and \eqref{eqn: dynamics_service_queue} imply that the service queue capacity constraint has different relative degrees with respect to the two control variables. The admission rate \(\alpha(t)\) appears after one differentiation, whereas the pricing control \(p(t)\) appears only after two differentiations.
This mismatch creates a structural limitation. If the safety constraint \eqref{eqn: const1_service queue capacity} were enforced with a standard CBF, the resulting condition would depend only on \(\alpha(t)\). The controller could then satisfy the queue-capacity constraint by adjusting the admission rate alone, leaving pricing effectively disconnected from congestion regulation.

To overcome this issue, we adopt an integral HOCBF (iHOCBF) formulation as in \cite{xiao2022ihocbf}, which augments the system dynamics so that the service queue capacity constraint has the same relative degree with respect to both control inputs. As a result, both pricing and admission decisions remain active in enforcing the safety constraint, while the optimization problem remains tractable and the forward invariance of the safe set is preserved.

\subsubsection{Extended State Space}

To resolve the relative-degree mismatch identified earlier, we augment the system with auxiliary dynamics for the admission rate \(\alpha(t)\), treating it as a state variable: 
\begin{equation}\label{eqn: aux_alpha}
    \dot{\alpha}(t) = \nu(t),
\end{equation}
where \(\nu(t)\) is a new control input.
The augmented system is thus controlled by the input pair \((p(t), \nu(t))\), which jointly influence the demand and service dynamics through pricing and admission-rate regulation, respectively.
Thus, we define the extended state vector
\begin{equation}\label{eqn: extended state vector}
    \mathbf{x}(t) = \begin{bmatrix} q(t) & z(t) & \alpha(t) \end{bmatrix}^\top,
\end{equation}
which captures the service queue length, the aggregate demand queue length, and the admission rate, respectively.
The corresponding system dynamics can be written in affine control form as follows (for notational simplicity, we omit the explicit time index \(t\) of some variables):
\begin{equation}\label{eqn: dynamics of new states}
\small
\dot{\mathbf{x}} =
\underbrace{\begin{bmatrix}
\alpha z - \mu \\
K - z \mathbf{w}^\top \mathbf{r}_2 - \alpha z \\
0
\end{bmatrix}}_{F(\mathbf{x})}
+
\underbrace{\begin{bmatrix}
0 \\
- z \mathbf{w}^\top \mathbf{r}_1 \\
0
\end{bmatrix}}_{G_p(\mathbf{x})} p
+
\underbrace{\begin{bmatrix}
0 \\
0 \\
1
\end{bmatrix}}_{G_\nu(\mathbf{x})} \nu .
\end{equation}
Here, 
the vector \(\mathbf{w} = [w_1, \ldots, w_N]^\top\) represents the proportions of each user class in the demand queue,  satisfying \(\sum_{i=1}^N w_i = 1\).
The parameter vectors \(\mathbf{r}_1 = [r_{1,1}, \ldots, r_{N,1}]^\top\) and \(\mathbf{r}_2 = [r_{1,2}, \ldots, r_{N,2}]^\top\) collect the class-specific price sensitivity and baseline dropout coefficients, respectively, as defined in \eqref{eqn:dropout_rate}.

\subsubsection{iHOCBF Formulation}
Let the service queue capacity constraint be expressed as a  function of the extended state vector
\begin{equation} \label{eqn: new b(x)}
    b(\mathbf{x}) = q_{\max} - q .
\end{equation}
Under the extended dynamics \eqref{eqn: dynamics of new states}, the constraint \(b(\mathbf{x}) \ge 0\) has maximum relative degree two. 
We therefore employ an iHOCBF constraint with the form of \eqref{eqn: general_hocbf_constraint} to enforce forward invariance:
\begin{equation}\label{eqn: iHOCBF}
\begin{split}
\eta_1(\mathbf{x}, p, \nu,\mathbf{w}) &= \mathcal{L}_F^2 b(\mathbf{x}) + \mathcal{L}_{G_p} \mathcal{L}_F b (\mathbf{x}) p + \mathcal{L}_{G_\nu} \mathcal{L}_F b (\mathbf{x}) \nu \\
&\quad + \lambda_2 [ \mathcal{L}_F b(\mathbf{x}) +\mathcal{L}_{G_p}b(\mathbf{x}) p + \mathcal{L}_{G_{\nu}}b(\mathbf{x}) \nu  \\
&\qquad + \lambda_1 (b(\mathbf{x}))]\geq 0
\end{split}
\end{equation}
where linear class-\(\mathcal{K}\) functions are used for simplicity, and \(\lambda_1, \lambda_2 > 0\) are the corresponding coefficients for first and second order respectively.

The required Lie derivatives are computed as follows:
\begin{equation}\label{eqn:LieF}
\mathcal{L}_F b(\mathbf{x}) = - \alpha z + \mu,
\end{equation}
\begin{equation}\label{eqn:LieGp}
\mathcal{L}_{G_p} b(\mathbf{x}) = 0,
\qquad
\mathcal{L}_{G_\nu} b(\mathbf{x}) = 0,
\end{equation}
\begin{equation}\label{eqn:LieF2}
\mathcal{L}_F^2 b(\mathbf{x})
= (\alpha z - \mu) \frac{d\mu}{dq}
- \alpha \big[ K - z (\mathbf{w}^\top \mathbf{r}_2) - \alpha z \big],
\end{equation}
\begin{equation}\label{eqn:LieGpF}
\mathcal{L}_{G_p} \mathcal{L}_F b(\mathbf{x})
= \alpha z (\mathbf{w}^\top \mathbf{r}_1),
\end{equation}
\begin{equation}\label{eqn:LieGnuF}
\mathcal{L}_{G_\nu} \mathcal{L}_F b(\mathbf{x})
= - z .
\end{equation}

This iHOCBF formulation explicitly incorporates both pricing and admission controls into the constraint, thereby ensuring forward invariance of the service queue capacity while preserving full actuation authority and computational tractability.

To better align the problem formulation with the discrete decision-making structure, we discretize time into uniform intervals of length \(T_d > 0\). Let \(k \in \mathbb{Z}_{\ge 0}\) denote the discrete decision index, with decision times \(t_k = k T_d\). The system state and control inputs evaluated at these decision points are denoted by
$\mathbf{x}_k = \mathbf{x}(t_k),~
p_k = p(t_k),~
\nu_k = \nu(t_k),~
\mathbf{w}_k = \mathbf{w}(t_k),~
I_k = I(t_k)$.
At each decision point \(t_k\), the $p_k$ and $\alpha_k$ are computed and held constant over the interval \([t_k, t_{k+1})\).

We can now replace the original capacity constraint with the iHOCBF condition and state the following discrete-time optimal control problem at each decision epoch \(t_k = kT_d\).

\textbf{Problem 2}:
Given the observed state \(\mathbf{x}_k = \mathbf{x}(t_k)\) and demand estimate over the next window, determine constant controls \((p_k,\nu_{k-1})\) that maximize the revenue over one decision window while enforcing queue-capacity safety and fairness:
\begin{subequations}\label{prob:2}
\begin{align}
\max_{p_k,\,\nu_{k-1}}\quad 
& J_k \;=\;   T_d\,p_k\,\alpha_{k}\,z_{k}\
\label{prob:2_obj}\\[1mm]
\text{s.t.}\quad 
& \eta_1\!\left(\mathbf{x}_{k+1},\,p_k,\,\nu_{k-1},\,\mathbf{w}_{k+1}\right)\;\ge\;0,
\label{prob:2_ihocbf}\\
& \eta_2(\mathbf{x}_{k+1},\,p_k,\,\nu_{k-1},\,\mathbf{w}_{k+1}) = \theta_d - I_{k+1} \geq 0
\label{prob:2_fairness}\\
& \eqref{eqn:service_rate}, \eqref{eqn: Unfairness Index}, \eqref{eqn: aux_alpha}, \eqref{eqn: dynamics of new states}, \eqref{eqn: iHOCBF} 
\label{prob:2_defs}
\end{align}
\end{subequations}

\subsection{Robust Pricing Control Framework}\label{sec: robust pricing control}

Since user types are not directly observable, uncertainty arises in the composition of the demand queue in Fig.~\ref{fig:demo}, even when the dynamics associated with each user type are accurately known.
In particular, the proportion vector $\mathbf{w}$ in \textbf{Problem 2} describing the distribution of user types in the demand queue is unknown.
In contrast, key aggregate quantities, including the total demand queue length ($z$), the total demand arrival rate ($K$), and the aggregate dropout quantity, which is denoted as $d$, are directly observable. We leverage these measurements to construct a constrained uncertainty set for \(\mathbf{w}\) and formulate a worst-case robust control framework. Specifically, we characterize the most adverse user-type distribution that is consistent with the observed data and enforce safety and fairness constraints under this worst-case realization. By guaranteeing constraint satisfaction in this setting, robust performance is ensured despite partial observability of the demand composition. 

For the \emph{Service Queue Capacity Constraint}, given the system state $\mathbf{x}$, price $p$, and auxiliary control input $\nu$, we define the associated worst-case safety margin by solving \textbf{Problem~3}:

\begin{align}\label{eqn: eta_1_star}
\eta_1^*(\mathbf{x}, p, \nu)
    \triangleq
    \min_{\mathbf{w}} \;& \eta_1(\mathbf{x}, p, \nu, \mathbf{w}) \\
\text{s.t.}\quad 
    \nonumber & \mathbf{1}^\top \mathbf{w} = 1, \\ \nonumber
    & \mathbf{w} \geq \mathbf{0}, \\ \nonumber
    & \sum_{i} f_i(p)\, w_i\, z = d
\end{align}
where $\mathbf{w}$ denotes the (unobservable) user-type proportion vector, $f_i(p)$ is the dropout rate of user type $i$ under price $p$, $z$ is the aggregate demand queue length, and $d$ is the observed aggregate dropout quantity. 
We call $\eta_1^*$ the \emph{worst-case} safety margin because it represents the minimal $\eta_1$ over all user-type distributions consistent with observed data. Ensuring $\eta_1^* \ge 0$ guarantees that \eqref{prob:2_ihocbf} is satisfied robustly for any feasible $\mathbf{w}$ under demand composition uncertainty.

Observe that when $N=2$, \textbf{Problem~3} need not be solved explicitly, since $\mathbf{w}$ can be determined directly from the relationship between $d$ and $z$. In contrast, when more than two user types are present, the vector $\mathbf{w}$ cannot in general be uniquely identified from the available measurements. 

Similarly, for the \emph{Fairness Constraint}, we define the associated worst-case fairness margin as
\begin{equation}\label{eqn: eta_2_star}
\eta_2^*(\mathbf{x}, p, \nu)
\triangleq
\min_{\mathbf{w}} \; \eta_2(\mathbf{x}, p, \nu, \mathbf{w}),
\end{equation}
subject to the same feasibility constraints on $\mathbf{w}$ as in \textbf{Problem~3}.

By including these new constraint, we can now define \textbf{Problem 4} for solving the robust optimal control problem when all constraints are guaranteed to be satisfied regardless of unobservable demand proportion:
\begin{align}
\max_{p_k,\,\nu_k}\quad 
& J_k \;=\;   p_k\,\alpha_{k+1}\,z_{k+1}\ \\[1mm]
\text{s.t.}\quad \nonumber
& \eta_1^*(\mathbf{x}_{k+1},\,p_k,\,\nu_k)\;\ge\;0,\\ \nonumber
& \eta_2^*(\mathbf{x}_{k+1},\,p_k,\,\nu_k)  \geq 0\\ \nonumber
& \eqref{eqn:service_rate}, \eqref{eqn: Unfairness Index}, \eqref{eqn: aux_alpha}, \eqref{eqn: dynamics of new states}, \eqref{eqn: iHOCBF} 
\end{align}

\begin{proposition}[Robust fairness and capacity guarantees]\label{prop:robust_forward_invariance}
Consider the system with state defined in \eqref{eqn: extended state vector} and dynamics given in
\eqref{eqn: dynamics of new states}. Let $\mathcal{W}(t;p)$ denote the set of user-type
distributions consistent with the observable aggregate measurements at time $t$ under price $p$.
Assume that the initial state satisfies $q(0)\le q_{\max}$ and $I(0)\le \theta_d$. For any true
(unobservable) demand composition that satisfies $\mathbf{w}(t)\in\mathcal{W}(t;p)$, $t\ge 0$, if the control inputs $(p,\nu)$ are generated by the proposed controller as a feasible solution of
\textbf{Problem~4} based on the current observations, then the resulting system satisfies
$q(t)\le q_{\max},~I(t)\le \theta_d,~\forall t\ge 0.$
\end{proposition}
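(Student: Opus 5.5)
The plan is to reduce the robust statement to the nominal (known-$\mathbf{w}$) guarantees by a pointwise worst-case argument. Then invoke the standard HOCBF forward-invariance result for the capacity constraint, and a direct induction over decision windows for the fairness constraint.

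\emph{Step 1: from robust to nominal constraints.} Fix any time $t$ and let $\mathbf{w}(t)$ be the true composition. By hypothesis $\mathbf{w}(t)\in\mathcal{W}(t;p)$, and $\mathcal{W}(t;p)$ is exactly the feasible set of \textbf{Problem~3}: the simplex intersected with the measurement constraint $\sum_i f_i(p)w_iz=d$. Hence $\mathbf{w}(t)$ is feasible for the minimizations defining $\eta_1^*$ and $\eta_2^*$. This gives
\[
\eta_1(\mathbf{x},p,\nu,\mathbf{w}(t))\ \ge\ \eta_1^*(\mathbf{x},p,\nu)\ \ge\ 0,\qquad \eta_2(\mathbf{x},p,\nu,\mathbf{w}(t))\ \ge\ \eta_2^*(\mathbf{x},p,\nu)\ \ge\ 0,
\]
where the last inequalities hold because $(p,\nu)$ is feasible for \textbf{Problem~4}. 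Thus along the true trajectory the nominal iHOCBF condition \eqref{eqn: iHOCBF} and the fairness condition \eqref{prob:2_fairness} both hold for the actual dynamics \eqref{eqn: dynamics of new states}.

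\emph{Step 2: capacity.} Set $\psi_0=b=q_{\max}-q$ and $\psi_1=\dot b+\lambda_1 b$. Then \eqref{eqn: iHOCBF} reads $\dot\psi_1+\lambda_2\psi_1\ge 0$. By the comparison lemma, $\psi_1(t)\ge e^{-\lambda_2 t}\psi_1(0)$, so $\psi_1\ge0$ for all $t$ provided $\psi_1(0)\ge 0$. Then $\dot b\ge -\lambda_1 b$ gives $b(t)\ge e^{-\lambda_1 t}b(0)\ge 0$, since $q(0)\le q_{\max}$. This is the standard HOCBF invariance result of \cite{xiao2021high,xiao2022ihocbf}, which we invoke directly. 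Two points need care. First, the initial condition $\psi_1(0)\ge0$ must be added (it is implicit in the HOCBF framework) or enforced by the choice of $\alpha(0)$. Second, $\mu$ in \eqref{eqn:service_rate} is only piecewise differentiable, so $d\mu/dq$ is handled with one-sided derivatives; the comparison argument goes through for absolutely continuous $\psi_1$.

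\emph{Step 3: fairness, and the main obstacle.} Fairness is an integral (relative-degree-zero) quantity. I will argue by induction on $k$ that $I(t)\le\theta_d$ on each window $[t_k,t_{k+1}]$, with base case $I(0)\le\theta_d$. The expected obstacle is the sampled-data gap. Problem~4 imposes its constraints at the predicted point $\mathbf{x}_{k+1}$ with controls held constant, whereas the proposition claims satisfaction for all $t$. I would first close the gap by interpreting the constraints as imposed along the predicted trajectory over the window, using the fact that the controls are constant and $I$ is continuous in $t$; equivalently, I would take the controller's constraints to hold at every evaluation instant. Under that reading, Step 1 applies at every $t$, and the two bounds follow for all $t\ge0$. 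If only endpoint enforcement is available, I would instead bound the intra-window variation of $I$ and $q$ by Lipschitz estimates and note that the claim then holds up to a margin vanishing as $T_d\to0$. I expect the paper's intended proof to adopt the former interpretation.
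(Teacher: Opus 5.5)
Your proposal is correct and follows the paper's own argument. The paper likewise notes that $\eta_j^*\ge 0$ forces $\eta_j(\mathbf{x},p,\nu,\mathbf{w})\ge 0$ for every $\mathbf{w}\in\mathcal{W}(t;p)$, hence for the true composition; it then cites iHOCBF forward invariance for $q(t)\le q_{\max}$ and reads $I(t)\le\theta_d$ directly off $\eta_2=\theta_d-I$, implicitly treating both constraints as enforced at every $t$ rather than only at the sampled points $t_{k+1}$, which is the reading you anticipated.

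The extra care you add (the condition $\psi_1(0)\ge 0$, the piecewise-smooth $\mu$, and that continuous-enforcement reading) makes explicit hypotheses the paper's proof uses without stating them.
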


\begin{proof}
Let $\mathbf{w}^*(t)$ denote the true (unobservable) user-type composition at time $t$.
For any control inputs $(p,\nu)$ satisfying $\eta_j^*(\mathbf{x},p,\nu)\ge 0$ for $j\in\{1,2\}$, it follows from the worst-case margins defined in  \eqref{eqn: eta_1_star} and \eqref{eqn: eta_2_star} that 
\[
\eta_j(\mathbf{x},p,\nu,\mathbf{w})\ge 0, \quad \forall \mathbf{w}\in\mathcal{W}(t;p),\quad j \in \{1,2\}
\]
Since the true composition satisfies $\mathbf{w}^*(t)\in \mathcal{W}(t;p)$, we obtain
\[
\eta_j(\mathbf{x},p,\nu,\mathbf{w}^*(t))\ge 0, \quad j \in \{1,2\}
\]

For the service queue capacity constraint, $\eta_1$ corresponds to the iHOCBF condition
associated with the barrier function $b(x)=q_{\max}-q$. By the forward-invariance property
of iHOCBFs, satisfaction of the initial condition $q(0)\le q_{\max}$ together with $\eta_1\ge 0$ implies that
$q(t)\le q_{\max}$ for all $t\ge 0$.
For the fairness constraint, $\eta_2$ is defined as $\eta_2(\mathbf{x},p,\nu,\mathbf{w})=\theta_d-I(t)$.
Thus, $\eta_2(\mathbf{x},p,\nu,\mathbf{w}^*(t))\ge 0$ directly implies that $I(t)\le \theta_d$ for all
$t\ge 0$.
\hfill 
\end{proof}

\section{Simulation Results}
This section presents simulation results that demonstrate the effectiveness of the proposed dynamic access pricing and admission control framework under several representative scenarios. 
Unless otherwise stated, we consider a system with two user types. Group~1 corresponds to price-elastic users, characterized by the price sensitivity parameters \(r_{1,1}=0.05\) and \(r_{1,2}=0\) in~\eqref{eqn:dropout_rate}, while Group~2 represents inelastic users with \(r_{2,1}=0\) and \(r_{2,2}=0\). User arrivals are generated according to a clipped Gaussian distribution to capture realistic demand variability while ensuring bounded arrival rates. We note that the proposed framework is not restricted to this specific distribution and can accommodate general stochastic arrival processes.

For comparison purposes, we consider a baseline \emph{Surge Pricing Policy} in which the access price increases monotonically and the admission rate decreases monotonically as the service queue length grows. The specific pricing and admission rate functions for surge pricing are adopted from our prior work \cite{king2025dynamic} for better comparison with the analysis in \cite{king2025dynamic}.
Specifically, the surge pricing baseline adjusts the access price according to the normalized service congestion level
\begin{equation}
\rho \triangleq \min\!\left(1,\; \frac{q}{q_{\max}}\right),
\end{equation}
and computes the price as
\begin{equation}
p^{surge}(q)
=
\max\!\left(
p_{\min},\;
\min\!\left(
p_{\max},\;
1 + 9 \bigl( 3 \rho^2 - 2 \rho^3 \bigr)
\right)
\right).
\end{equation}
where $q_{\max}$ denotes the service queue capacity and $p_{\min}, p_{\max}$ are the minimum and maximum allowable prices, respectively, where $p_{\min}=0$, $p_{\max}=10$ in our simulation.
Under the surge pricing baseline, the admission rate is given by
\begin{equation}
\alpha^{surge}(q)
=
\max\!\left(
0,
\min\!\left(
1,\;
1 + b_1 \rho + b_2 \rho^2
- b_3\rho^3
\right)
\right)
\end{equation}
where \(b_1\) and \(b_2\) are design parameters shaping the sensitivity of admissions to congestion, where $b_1 = -0.129$, $b_2 = -0.967$ and $b_3=-0.096$ in our case.

The decision window is set to $T_d=0.1$. At each decision point, prices and admission rates are computed using the proposed dynamic fair pricing controller and the baseline surge pricing policy. We compare their performance under light, heavy, and time-varying demand, examine the resulting policy and performance trends, and evaluate robustness under the limited observability assumption.

\subsection{Light Demand Scenario}
The simulation results under the light-demand setting with mean arrival rates $\bar K_1 = 4$ for elastic users and $\bar K_2 = 2$ for inelastic users are shown in Fig. \ref{fig:compare_light_Gaussian}. 
All quantities are shown as instantaneous time trajectories. In particular, the ``Price \& Accumulated Revenue'' subplot depicts the instantaneous price $p(t)$ and accumulated revenue rate $r(t)=\sum_t{p(t)\alpha(t)z(t)}$.
It can be observed that both the proposed controller and the baseline surge pricing policy stabilize the system and satisfy the fairness constraint, with the unfairness index remaining below the prescribed threshold. However, their performance differs at steady state. The proposed dynamic fair pricing controller maintains lower queue levels while sustaining a high admission rate, leading to higher revenue, which aligns with its control objective. These results demonstrate that, under light traffic conditions, the proposed method attains improved revenue performance without compromising system stability or fairness.

\begin{figure*}[t]
    \centering
    \includegraphics[width=1.99\columnwidth]{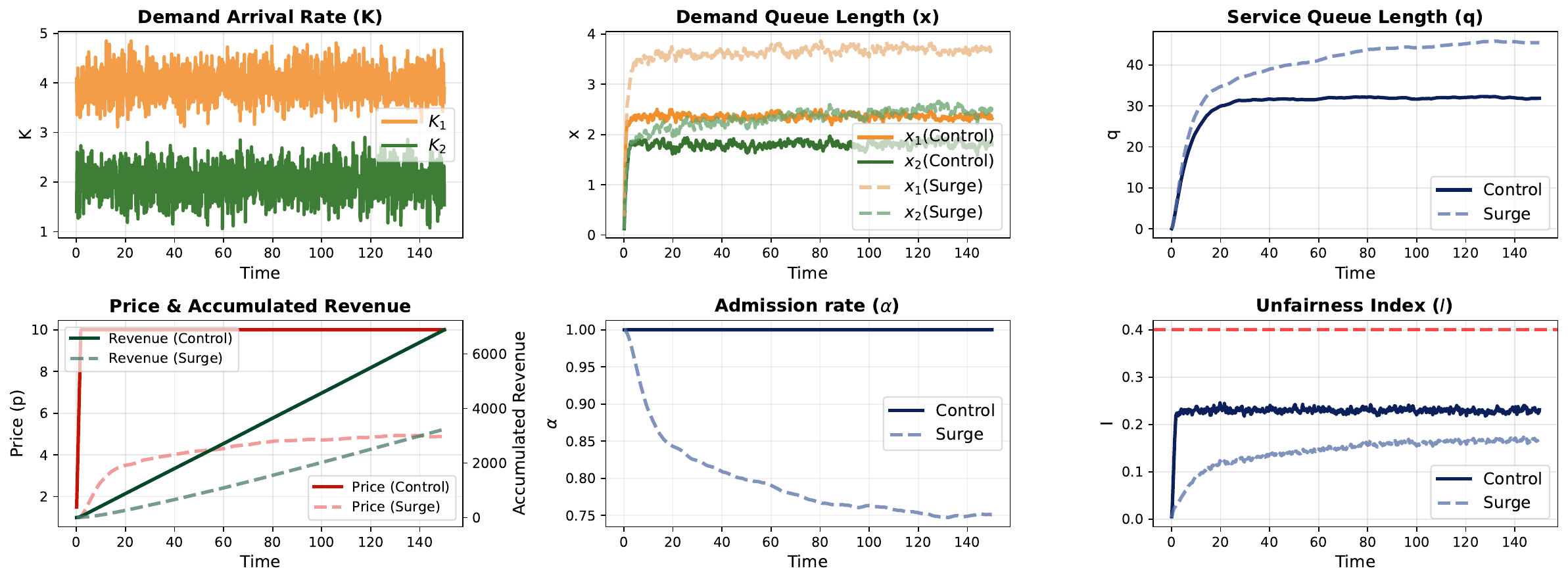}
    \caption{Trajectories Comparison under Light Stochastic Gaussian Demand}   
    \label{fig:compare_light_Gaussian}
\end{figure*}

\subsection{Heavy Demand Scenario}\label{subsec: heavy demand scenario}
The heavy-demand scenario with $\bar K_1 = 7$ and $\bar K_2 = 4$ is illustrated in Fig.~\ref{fig:compare_heavy_Gaussian}. In this regime, the baseline surge pricing policy stabilizes the demand queue primarily by continuously increasing prices to suppress elastic demand. While effective for congestion control, this strategy leads to a rapid increase in the unfairness index, which exceeds the prescribed threshold and thus violates the fairness constraint. In contrast, the proposed optimal dynamic fair pricing controller responds to system overload by lowering prices and avoiding aggressive demand exclusion, thereby maintaining the unfairness index within the allowable bound. Although this results in growing demand queues during the peak period, it preserves equitable access across user groups. Since heavy demand is often transient (e.g., due to temporary disruptions such as severe weather), buffering demand while maintaining fairness is preferable to selectively excluding elastic users through excessive pricing. Even for sustained high-demand conditions, these results further suggest that capacity expansion, rather than discriminatory pricing, is a more appropriate long-term solution.

\begin{figure*}[t]
    \centering
    \includegraphics[width=1.99\columnwidth]{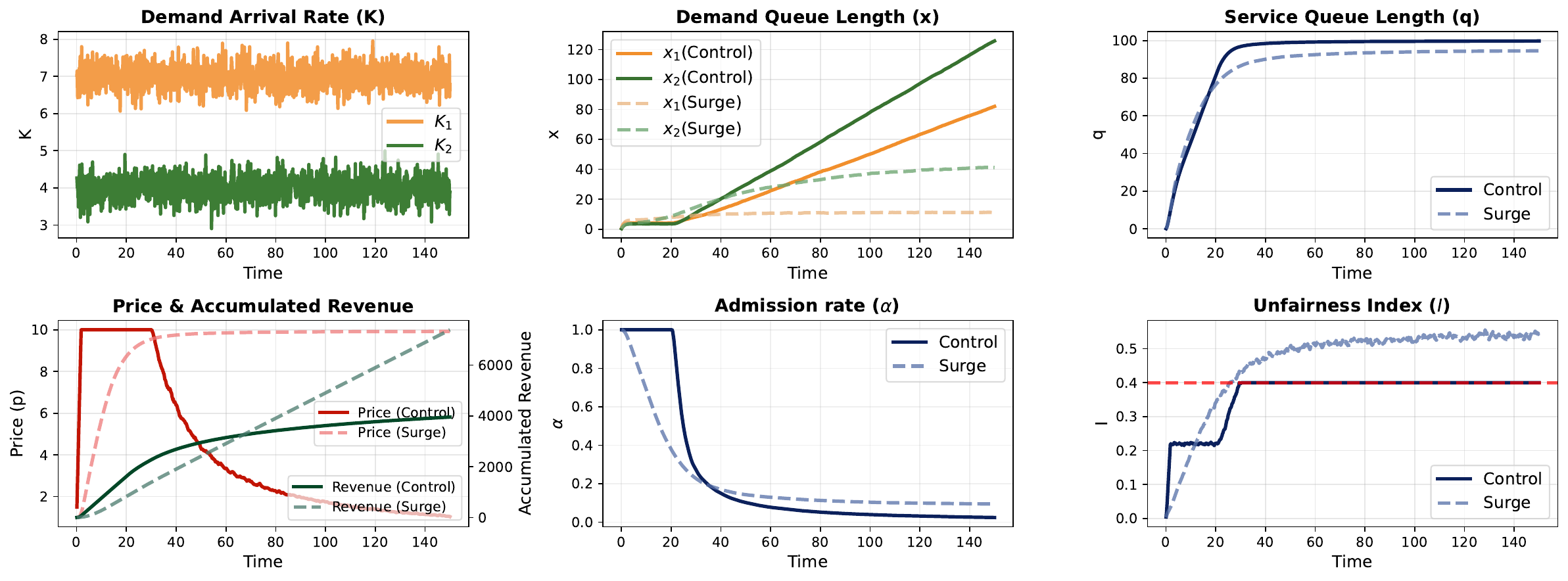}
    \caption{Trajectories Comparison under Heavy Stochastic Gaussian Demand}   
    \label{fig:compare_heavy_Gaussian}
\end{figure*}

\subsection{Dynamic Stochastic Demand}
The performance under time-varying demand is shown in Fig.~\ref{fig:compare_dynamic_Gaussian}. In this scenario, the inelastic demand (group~2) experiences a sharp increase at $t=60$ followed by a gradual decay around $t=100$, while the elastic demand (group~1) remains elevated over the interval $t\in[60,180]$. When faced with this unexpected demand surge, the proposed dynamic fair pricing controller adapts by jointly reducing the price and admission rate, preventing violation of the service queue capacity constraint while maintaining the unfairness index within the prescribed threshold. In contrast, the baseline surge pricing policy primarily relies on increasing prices to suppress elastic demand, which leads to a pronounced rise in the unfairness index during $t\in[80,130]$. 
Notably, although our controller sets lower prices during congestion, it still achieves higher cumulative revenue than the surge pricing policy because the congestion period is short, as is typical of many unexpected demand surges.
These results demonstrate that the proposed method can effectively accommodate time-varying demand while preserving fairness and stable system performance. 


\begin{figure*}
    \centering
    \includegraphics[width=1.99\columnwidth]{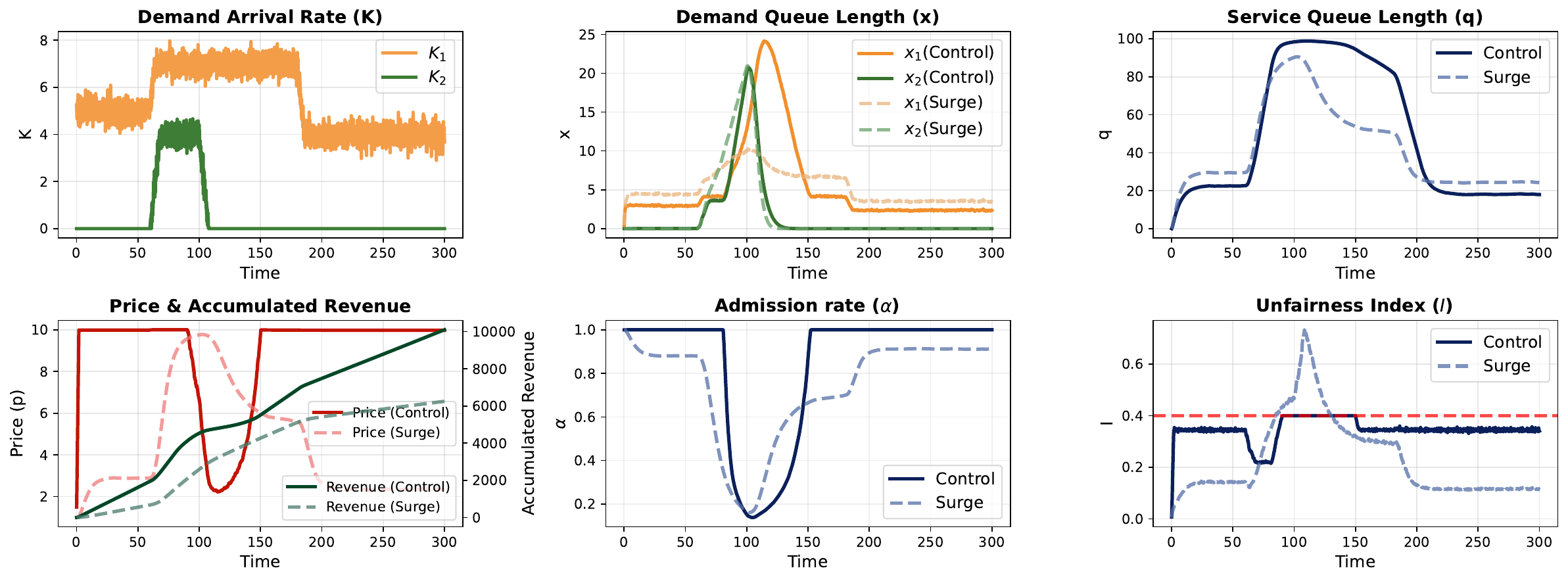}
    \caption{Trajectories Comparison under Time-varying Gaussian Demand}
    \label{fig:compare_dynamic_Gaussian}
\end{figure*}

\subsection{Performance across Different Unfairness thresholds ($\theta_d$)}
Figure \ref{fig:compare theta_d} further illustrates the trade-off induced by the unfairness threshold under the heavy-demand setting described in Subsection\ref{subsec: heavy demand scenario}. When the threshold is tight ($\theta_d=0.2$), the fairness constraint becomes active early, causing the controller to sharply reduce both price and admission rate after the congestion onset. As a result, the unfairness index stays close to the prescribed bound, but cumulative revenue is the lowest among the three cases. As the threshold is relaxed to $\theta_d=0.4$ and $\theta_d=0.6$, the controller can maintain higher prices and larger admission rates for a longer period, which increases cumulative revenue while allowing a correspondingly higher, but still controlled, level of unfairness. In particular, the case $\theta_d=0.6$ produces revenue higher than the baseline surge pricing policy, while still keeping the unfairness index below the prescribed limit throughout the horizon. These results show that $\theta_d$ provides a transparent tuning parameter for balancing fairness and revenue under heavy traffic.

\begin{figure*}
    \centering
    \includegraphics[width=1.99\columnwidth]{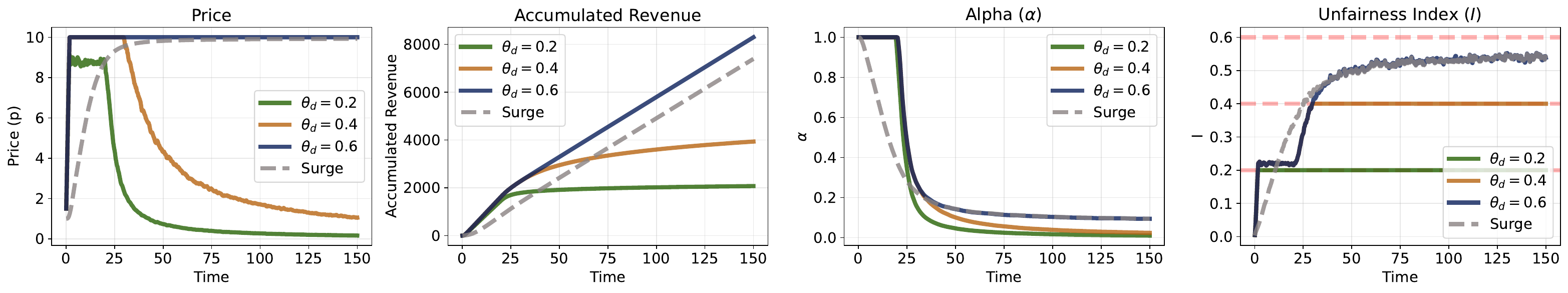}
    \caption{Trajectories Comparison under different $I$ threshold ($\theta_d$)}
    \label{fig:compare theta_d}
\end{figure*}

\subsection{Performance across Different Demand}
To examine the control pattern of the proposed dynamic fair pricing controller, we fix the inelastic demand at $K_2=2$ and vary the elastic demand $K_1$ from $3$ to $10$, running each scenario for $100$ time steps. The resulting control actions and performance metrics are shown in Fig.~\ref{fig:compare_different_demand_K2=2}. When demand is low ($K_1 \leq 6$), the controller sets relatively high prices, stabilizing both demand and service queues well below capacity. As $K_1$ increases further, the service queue approaches its capacity and the fairness constraint becomes active, triggering a sharp reduction in price and admission rate. This behavior contrasts with the surge pricing policy, which monotonically increases prices in response to growing service queues, leading to violations of the unfairness threshold. Notably, the resulting non-monotone pricing pattern, which includes high prices under light load followed by a sharp drop under heavy load, aligns with the structure identified in prior work~\cite{king2025dynamic}, where stability properties of such pricing functions were analytically established.

\begin{figure*}
    \centering
    \includegraphics[width=1.99\columnwidth]{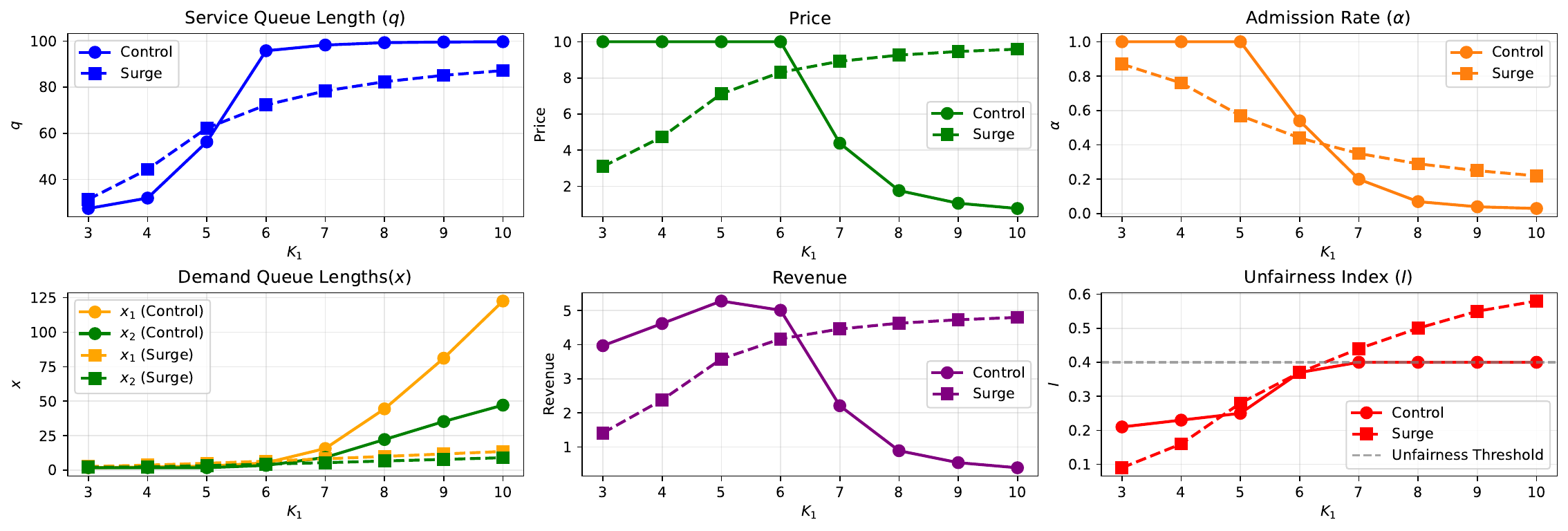}
    \caption{Performance across Different $K_1$ with $K_2=2$}
    \label{fig:compare_different_demand_K2=2}
\end{figure*}

\subsection{Performance with Unobservable States}
The above simulations consider two user groups, so the demand composition is uniquely determined. We next consider a more challenging three-group setting, where the demand composition cannot be uniquely identified from the available measurements and therefore requires the robust control framework introduced in Section~\ref{sec: robust pricing control}. 
 We consider three user groups with elasticity parameters $r_{1,1}=0.05$, $r_{2,1}=0.02$, and $r_{3,1}=0$, and $r_{1,2}=r_{2,2}=r_{3,2}=0$, corresponding to highly elastic, moderately elastic, and inelastic users, respectively. The time-varying demand profile is shown in Fig.~\ref{fig:unobservable_multi}(a), where group~2 and group~3 experience demand surges around $t=60$ and $t=150$, respectively, followed by a decay near $t=200$. Although these group-wise states are unobservable to the controller, the proposed framework estimates worst-case demand queue lengths for each group, as illustrated in Fig.~\ref{fig:unobservable_multi}(b), where solid curves denote the true (unobservable) states and dashed curves represent the estimated averages. Using these worst-case estimates, the controller computes robust pricing and admission decisions that satisfy both the service queue capacity and fairness constraints. As shown in Fig.~\ref{fig:unobservable_multi}(f), when the fairness constraint becomes active under estimated states for $t\in[155,245]$, the unfairness index evaluated using the true states remains below the threshold with a small margin.
In contrast, the baseline surge pricing policy leads to a clear violation of the fairness constraint. This demonstrates that the proposed robust fair pricing framework can effectively guarantee fairness even under unobservable and time-varying demand conditions.

\begin{figure*}
    \centering
    \includegraphics[width=1.99\columnwidth]{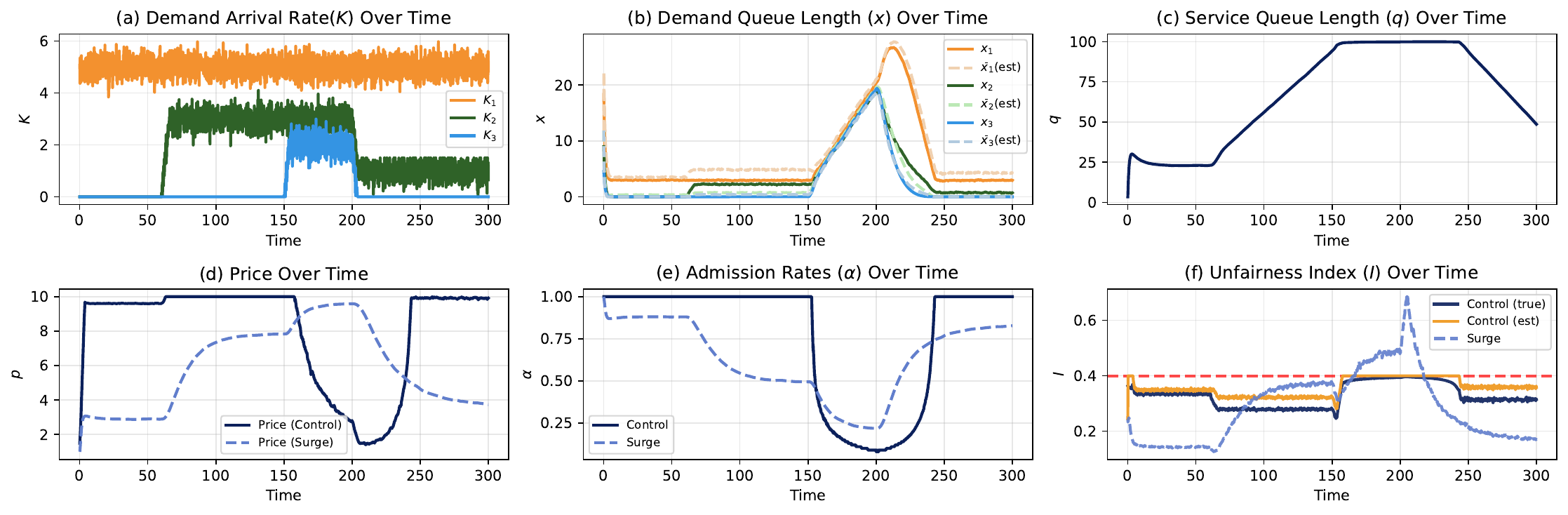}
    \caption{Performance with Unobservable States (3 groups)}
    \label{fig:unobservable_multi}
\end{figure*}

\section{conclusion}
We have presented a robust dynamic pricing and admission control framework for congested service systems with heterogeneous and partially observable user demand. By treating pricing and admission as control inputs and enforcing service capacity through an integral high-order CBF, the proposed approach guarantees safe queue operation while preserving actuation authority of both controls. Robustness to unobservable demand composition was achieved via a worst-case formulation that enforces capacity and fairness constraints over all user-type distributions consistent with aggregate measurements.  Simulation results under stochastic and time-varying demand scenarios demonstrated that the proposed method improves fairness and revenue compared to standard surge pricing policies, while maintaining stable and safe system operation.

\bibliographystyle{IEEEtran}
\bibliography{ref}
\end{document}